\documentclass[conference]{IEEEtran}
\IEEEoverridecommandlockouts
\usepackage{cite}
\usepackage{amsmath,amssymb,amsfonts}
\usepackage{amsthm}
\usepackage{float}

\newtheorem{problem}{Problem}
\newtheorem{defn}{Definition}
\newtheorem{prop}{Proposition}

\usepackage{algorithmic}
\usepackage{graphicx}
\usepackage{textcomp}
\usepackage{xcolor}

\usepackage{cleveref}
\usepackage{microtype}

\def\BibTeX{{\rm B\kern-.05em{\sc i\kern-.025em b}\kern-.08em
    T\kern-.1667em\lower.7ex\hbox{E}\kern-.125emX}}

\begin{document}

\title{Barron-Wiener-Laguerre models}

\author{
\IEEEauthorblockN{Rahul Manavalan}
\IEEEauthorblockA{\textit{Center for Mathematical Sciences} \\  \textit{Lund University}}
\and
\IEEEauthorblockN{Filip Tronarp}
\IEEEauthorblockA{\textit{Center for Mathematical Sciences} \\ \textit{Lund University}}
}

\maketitle

\begin{abstract}
We propose a probabilistic extension of Wiener–Laguerre models for causal operator learning. Classical Wiener–Laguerre models parameterize stable linear dynamics using orthonormal Laguerre bases and apply a static nonlinear map to the resulting features. While structurally efficient and interpretable, they provide only deterministic point estimates.
We reinterpret the nonlinear component through the lens of Barron function approximation, viewing two-layer networks, random Fourier features, and extreme learning machines as discretizations of integral representations over parameter measures. This perspective naturally admits Bayesian inference on the nonlinear map and yields posterior predictive uncertainty.
By combining Laguerre-parameterized causal dynamics with probabilistic Barron-type nonlinear approximators, we obtain a structured yet expressive class of causal operators equipped with uncertainty quantification. The resulting framework bridges classical system identification and modern measure-based function approximation, providing a principled approach to time-series modeling and nonlinear systems identification.
\end{abstract}

\begin{IEEEkeywords}
    Systems-identification, Time-series modeling, Operator-learning, Uncertainty quantification, Barron networks. 
\end{IEEEkeywords}

\section{Introduction}

\subsection{Problem setup}
We are interested in solving causal operator learning problems such as timeseries modeling and systems identification. These problems are briefly presented for completeness.

\begin{problem}(Timeseries modeling) 
Let $T, T_{max} \in \mathbb{R}^+$,
$\mathcal{T} := [0,T] \cup (T, T_{max}]$, 
$u \in L^2([0,T_{max}], \mathbb{R}^n)$.
Let $\hat{\mathcal{T}} := \{t_0, t_1,\dots,t_m\} \subset [0,T]$ such that $t_0 = 0, t_M = T$. 
Given $\hat{U}:=\{u(t_0) + \epsilon_0, u(t_1) + \epsilon_1, \dots, u(t_{M}) + \epsilon_M\}$ where $\epsilon_i \sim \mathcal{N}(0, \sigma I)$ and $\sigma$ is unknown observation noise parameter, estimate $\tilde{u} \in L^2([0,T_{max},\mathbb{R}^n])$ such that for a constant $\eta >0$, $\|u-\tilde{u}\|_{L^2} < \eta$. 
\end{problem}

\begin{problem}(Systems-identification)
    Let $T, T_{max} \in \mathbb{R}^+$, 
    $\mathcal{T}:=[0,T] \cup (T, T_{max}]$.
    Let $a \in L^2([0,T_{max}], \mathbb{R}^{in})$, 
    $u \in L^2([0,T_{max}], \mathbb{R}^{out})$. The operator $J : L^2_{in} \mapsto L^2_{out}$ is bounded, continuous. 
    For $\hat{\mathcal{T}}:=\{t_0, t_1, \dots, t_M\}$ such that $t_0 = 0, t_M = T$, given the observations $\hat{A}:=\{a(t_0)+\epsilon_0,a(t_1)+\epsilon_1, \dots, a(t_M)+\epsilon_M \}$ and $\hat{U}:=\{u(t_0) + \beta_0, u(t_1) + \beta_1, \dots, u(t_M) + \beta_M\}$ where $\epsilon_i \sim \mathcal{N}(0,\sigma I)$ and $\beta_i \sim \mathcal{N}(0,\nu I)$ for observation noise parameters $\sigma, \nu > 0$, find a mapping $\tilde{J} : L_{in}^2 \mapsto L_{out}^2$ for $\delta > 0$, such that $\|\tilde{J} a - u\|_{L^2_{out}} < \delta.$
\end{problem}
\subsection{Previous Work}

The problem of learning nonlinear causal operators is classical. 
Volterra--Wiener theory provides universal expansions for fading-memory systems \cite{Wiener1958,Schetzen1980}. 
While expressive, these representations suffer from combinatorial parameter growth with memory depth. 
Boyd and Chua formalized fading-memory approximation and clarified the need for structured subclasses \cite{BoydChua1985}.
Block-oriented models, such as Wiener and Hammerstein systems, restrict the operator to a stable linear time-invariant (LTI) system composed with a static nonlinearity. 
This reduction is theoretically justified and practically dominant in nonlinear system identification \cite{9721138, LI2024114265, BAI2009736}. 
Further structural compression is achieved through orthonormal basis parameterizations of stable LTI systems. 
The generalized orthonormal basis function framework established consistent identification theory for such models \cite{VandenHof1995,Heuberger1995}. 
Wiener--Laguerre models inherit interpretability and computational tractability and remain central in systems identification \cite{Hagenblad2008}. 
However, classical formulations are deterministic and do not provide uncertainty quantification.

In parallel, approximation theory evolved along a different axis. 
Two-layer neural networks were shown to be universal approximators \cite{Cybenko1989,Hornik1989,Hornik1991}. 
Barron refined this picture by identifying a Fourier-defined function class for which shallow networks achieve dimension-independent convergence rates \cite{Barron1993}. 
Subsequent work connected these results to greedy approximation and convex formulations \cite{Barron2008,Bach2017ConvexNN}. 
These developments demonstrate that shallow models can avoid the curse of dimensionality under structural assumptions on the target.
A more structural reinterpretation emerged through measure-based formulations of two-layer networks. 
In this view, a neural network is an integral representation over parameter space. 
Training can be interpreted as gradient flow in probability space. 
The mean-field program of Weinan E and collaborators established population-risk bounds and clarified global convergence in overparameterized regimes \cite{E2019Risk,E2021Barron,ChizatBach2018,Mei2018,Rotskoff2018}. 
This framework connects neural networks, kernel methods, and optimal transport.
Random feature methods arise as Monte Carlo discretizations of these integral models. 
Random Fourier Features approximate shift-invariant kernels via spectral sampling \cite{RahimiRecht2007,RahimiRecht2008}, 
with subsequent analyses relating convergence to spectral decay and effective dimension \cite{Le2013Fastfood,Bach2017Quadrature,RudiRosasco2017}. 
Extreme Learning Machines fix hidden parameters randomly and solve only for linear output weights while retaining universal approximation guarantees \cite{Huang2006ELM,Huang2012ELM}.

These two lines of work: structured dynamical operator modeling via orthonormal bases \cite{Wiener1958,Schetzen1980,BoydChua1985,VandenHof1995,Heuberger1995,Hagenblad2008}, 
and measure-theoretic shallow approximation with dimension-robust guarantees \cite{Barron1993,E2021Barron,RahimiRecht2007}  have largely evolved independently. 
Their systematic integration—combining structured causal dynamics with probabilistic Barron-type nonlinear approximation—remains largely unexplored.

\section{Wiener-Laguerre models}
Consider the canonical linear model for a scalar valued input function $u$. 
\begin{align}
    \begin{gathered}
        \frac{d^px}{dt^p}(t) = u(t). 
    \end{gathered}
\end{align}
We may assume that the initial conditions of $x$ and up to its $p-1^{th}$ derivative are zero. Here, $p$ referred to the order of the model.  This linear system can be equivalently denoted by:
\begin{align}
    \begin{gathered}
        \frac{dy}{dt}(t) = A y(t) + B u(t), \quad y(0) = 0. 
    \end{gathered}
\end{align}
Its solution is formally given in  terms of the impulse response:
\begin{align}
    \begin{gathered}
        y(t) = \int_0^t d\tau \; \exp^{A(t-\tau)} B u(\tau ) := \int_0^t d\tau  k(t-\tau)  u(\tau). 
    \end{gathered}
\end{align}
The impulse response for this linear model is given by:
\begin{align}
    \begin{gathered}
        k(t-\tau) := \begin{bmatrix}
            \frac{(t-\tau )^{p-1}}{(p-1)!}  \; \frac{(t-\tau )^{p-2}}{(p-2)!}  \; \dots \; \frac{(t-\tau )}{1} \; 1
        \end{bmatrix}^T. 
    \end{gathered}
\end{align}
This model has infinite memory. Therefore, for longer time intervals, the resulting features $y$ are generally polynomially growing in $t$. To resolve this, it is necessary to incorporate forgetfulness into the linear model. We will realize this by parameterizing the impulse response using exponentially decaying Laguerre polynomials, also known as Laguerre functions. 
In the rest of this section, we will derive this basis. 

\begin{defn}{\textbf{Rescaled Laguerre polynomials}} \\ 
    Let $L_m$ and $L_n$ represent the $m^{th}$ and $n^{th}$ Laguerre polynomials respectively. The rescaled Laguerre polynomials may be derived from the orthogonality condition by a change of variable:
    \begin{align}
        \begin{gathered}
            \int_0^{\infty} dt \; e^{-t} \; L_m(t) \; L_n(t) = \delta_{m,n}. \\ 
            \equiv \int_0^{\infty} ds \; (\sqrt{2 \lambda} \;e^{-\lambda s} \; L_m(2\lambda s)) \; (\sqrt{2 \lambda} \;e^{-\lambda s} \; L_n(2\lambda s)), \\ 
            := \int_0^{\infty} ds \; l_m(s) \; l_n(s) = \delta_{m,n}.
        \end{gathered}
    \end{align}
\end{defn}

Here $t:=2\lambda s$.
The corresponding state matrices $A^L,B^L$ for the rescaled Laguerre bases may be derived using the following:
\begin{prop}[\cite{szego75}]
    \begin{align}
        \begin{gathered}
            \frac{d}{dt} L_m(2 \lambda t) = \frac{d}{dt} L_{m-1}(2 \lambda t) -2 \lambda  L_{m-1}(2 \lambda t).
        \end{gathered}
    \end{align}
\end{prop}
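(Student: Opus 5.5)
The plan is to reduce the statement to the classical identity for the (unscaled) Laguerre polynomials,
\begin{align}
    L_m'(x) = L_{m-1}'(x) - L_{m-1}(x), \qquad m \geq 1,
\end{align}
and then transfer it to the rescaled argument $x = 2\lambda t$ by the chain rule. For $m=0$ I will use the convention $L_{-1} \equiv 0$; the claim then reads $\frac{d}{dt}L_0 = 0$, which is trivially true since $L_0 \equiv 1$. Throughout I use the standard normalization $L_0=1$, for which $\{L_m\}$ is orthonormal with respect to $e^{-x}$ on $[0,\infty)$, as in the definition of the rescaled Laguerre polynomials above.

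To prove the unscaled identity, I will use the generating function
\begin{align}
    G(x,z) := \sum_{m\geq 0} L_m(x) z^m = \frac{1}{1-z}\exp\!\left(-\frac{xz}{1-z}\right), \qquad |z|<1,
\end{align}
which I take as known from \cite{szego75}. Differentiating in $x$ gives $\partial_x G = -\frac{z}{1-z} G$, that is, $(1-z)\,\partial_x G = -z\,G$. Term-by-term differentiation in $x$ is justified because the series converges locally uniformly in $x$ for fixed $|z|<1$.

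I then compare coefficients of $z^m$ for $m\ge 1$. The left-hand side contributes $L_m'(x) - L_{m-1}'(x)$, and the right-hand side contributes $-L_{m-1}(x)$. This is exactly the identity above. At order $z^0$ the comparison gives $L_0' = 0$, which is consistent with the convention for $m=0$.

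Finally, I set $x=2\lambda t$. The chain rule gives $\frac{d}{dt}L_m(2\lambda t) = 2\lambda L_m'(2\lambda t)$, and likewise for $L_{m-1}$. Multiplying the unscaled identity by $2\lambda$ therefore yields
\begin{align}
    \frac{d}{dt}L_m(2\lambda t) = \frac{d}{dt}L_{m-1}(2\lambda t) - 2\lambda L_{m-1}(2\lambda t),
\end{align}
which is the statement.

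The only step that needs care is fixing the generating function and the normalization consistently with the paper's orthonormality convention; the rest is routine. If one prefers to avoid generating functions, a fallback is to work from the explicit sum $L_m(x)=\sum_{k=0}^m \binom{m}{k}\frac{(-x)^k}{k!}$ and use Pascal's rule $\binom{m}{k}=\binom{m-1}{k}+\binom{m-1}{k-1}$. This splitting gives $L_m = L_{m-1} + \sum_{k\geq 1}\binom{m-1}{k-1}\frac{(-x)^k}{k!}$, and differentiating the second sum termwise gives $\sum_{k\geq 1}\binom{m-1}{k-1}\frac{(-1)^k x^{k-1}}{(k-1)!} = -L_{m-1}(x)$ after the shift $j=k-1$. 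This proves the same identity directly.
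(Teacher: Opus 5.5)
Your proposal is correct and follows the route the paper implicitly takes. The paper gives no argument beyond citing \cite{szego75} for the unscaled identity $L_m'(x) = L_{m-1}'(x) - L_{m-1}(x)$ and leaves the chain-rule rescaling $x = 2\lambda t$ to the reader; your generating-function (or Pascal's-rule) derivation makes that cited step self-contained. One point to state explicitly: the identity requires the sign convention $L_m(0)=1$ for every $m$, not merely $L_0=1$. Your explicit formulas fix this convention, but orthonormality alone, as in the paper's definition, does not, and the identity fails under $L_m \mapsto (-1)^m L_m$.
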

Its corollary is:
    \begin{align}
        \begin{gathered}
            \frac{d}{dt} l_m(t) = \frac{d}{dt} l_{m-1}(t) -\lambda  \left[l_{m-1}(t) + l_m(t)\right].
        \end{gathered}
    \end{align}
    \label{eq:rescaled}
\begin{prop}
    When the impulse response of a linear autonomous dynamical system is parameterized using rescaled-Laguerre polynomials, that is $\kappa^{RL}$, the state, input matrices are:
    \begin{align}
        \begin{gathered}
            A^L := \begin{bmatrix}
                -\lambda & -2\lambda & -2\lambda & \dots & -2\lambda \\ 
                &  - \lambda & -2\lambda & \dots & -2\lambda \\ 
                & & - \lambda  & \dots & -2\lambda \\ 
                & & & \ddots & \vdots \\ 
                & & & & - \lambda 
            \end{bmatrix}, \\  \quad B^L = \sqrt{2 \lambda} \begin{bmatrix}
                1 & 1 & \dots 1 & 1
            \end{bmatrix}^T.
        \end{gathered}
    \end{align}
\end{prop}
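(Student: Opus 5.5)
The plan is to show that the vector of rescaled Laguerre functions is itself the impulse response of the pair $(A^L,B^L)$. That is, with the states ordered in reverse,
\begin{align}
    \begin{gathered}
        \kappa^{RL}(t) := \begin{bmatrix} l_{p-1}(t) & l_{p-2}(t) & \dots & l_0(t) \end{bmatrix}^T,
    \end{gathered}
\end{align}
we want $\frac{d}{dt}\kappa^{RL}(t) = A^L \kappa^{RL}(t)$ and $\kappa^{RL}(0) = B^L$. Once both facts hold, uniqueness of solutions of linear ODEs with constant coefficients gives $\kappa^{RL}(t) = e^{A^L t} B^L$. This is exactly the impulse-response form $k(t-\tau) = e^{A(t-\tau)}B$ used for the canonical model above.

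The first step is the initial condition. Every Laguerre polynomial satisfies $L_m(0)=1$, so by the definition $l_m(s) = \sqrt{2\lambda}\, e^{-\lambda s} L_m(2\lambda s)$ we get $l_m(0) = \sqrt{2\lambda}$ for every $m$. Hence $\kappa^{RL}(0) = \sqrt{2\lambda}\,[1,\dots,1]^T = B^L$.

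The second step is the differential identity, which I would prove by induction on $m$. The claim is
\begin{align}
    \begin{gathered}
        \frac{d}{dt} l_m(t) = -\lambda\, l_m(t) - 2\lambda \sum_{j=0}^{m-1} l_j(t).
    \end{gathered}
\end{align}
For the base case, $L_0 \equiv 1$ gives $l_0(t) = \sqrt{2\lambda}e^{-\lambda t}$, so $l_0' = -\lambda l_0$. For the inductive step, I substitute the hypothesis for $l_{m-1}'$ into the corollary of the proposition from \cite{szego75}, namely $l_m' = l_{m-1}' - \lambda(l_{m-1} + l_m)$. This yields
\begin{align}
    \begin{gathered}
        l_m' = \Bigl(-\lambda l_{m-1} - 2\lambda\sum_{j<m-1} l_j\Bigr) - \lambda l_{m-1} - \lambda l_m,
    \end{gathered}
\end{align}
which collapses to $-\lambda l_m - 2\lambda\sum_{j<m} l_j$. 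Reading this identity row by row with the reversed ordering gives the stated $A^L$. The row belonging to $l_{p-1-i}$ has $-\lambda$ on the diagonal and $-2\lambda$ in every column to the right, since those columns correspond to the lower-index functions $l_j$ with $j < p-1-i$. Entries below the diagonal are zero.

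I expect the main obstacle to be bookkeeping rather than analysis. The natural ordering $(l_0,\dots,l_{p-1})$ produces a lower-triangular matrix, so the state must be ordered as $(l_{p-1},\dots,l_0)$ to match the upper-triangular form in the statement. The corollary should also be checked against its source. From the proposition of \cite{szego75}, $\frac{d}{dt}(e^{-\lambda t}L_m) = -\lambda e^{-\lambda t}L_m + e^{-\lambda t}\bigl(\frac{d}{dt}L_{m-1} - 2\lambda L_{m-1}\bigr)$. Rewriting $e^{-\lambda t}\frac{d}{dt}L_{m-1}$ as $\frac{d}{dt}(e^{-\lambda t}L_{m-1}) + \lambda e^{-\lambda t}L_{m-1}$ and multiplying by $\sqrt{2\lambda}$ recovers the corollary with the $-\lambda(l_{m-1}+l_m)$ term. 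Once this is confirmed, the induction above completes the proof.
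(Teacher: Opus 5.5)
Your proposal is correct and takes the same route as the paper's sketch: it uses the corollary of the first proposition to get a linear ODE for the vector of rescaled Laguerre functions, reads off $A^L$, and identifies $B^L$ with the impulse response at $t=0$. You also supply what the sketch leaves implicit, and these details check out: the induction unrolling the recursion into $l_m' = -\lambda l_m - 2\lambda\sum_{j<m} l_j$, the reversed ordering $(l_{p-1},\dots,l_0)$ needed for the upper-triangular form, and $L_m(0)=1$ giving $B^L$.
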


\begin{proof}[Sketch of proof:]
    From the corollary of proposition 1, we may obtain a linear relationship between the derivative of a vector of Laguerre functions and the vector of functions itself. Noticing that this resembles the augmented ODE with no input function, we get $A^L$. Furthermore, note that $B^L = k(0).$
\end{proof}

\begin{defn}{\textbf{Wiener-Laguerre model}} \\
    For an instance of the input $u \in \mathcal{U}$ and the output $z \in \mathcal{Z}$, the Wiener-Laguerre model characterizes their relation as:
    \begin{align}
        \begin{gathered}
            z(t) = \phi^{\theta}(y(t)), \\ 
            y(t) := \int_0^t d\tau \; A^Ly(\tau) +  B^L u(\tau), \\ 
            \phi^{\theta}(y)(t) := \sum_{i=1}^k a_i \sigma\left(W_i^T y(t) + b_i\right).
        \end{gathered}
    \end{align}
\end{defn}

The same model can be extended to vector-valued input functions by stacking these models on top of one another. For a sequence of model order $\{p_i\}_{i=1}^{d_{in}} \subset \mathbb{N}$, we may define an ensemble of linear ODEs:
\begin{align}
    \begin{gathered}
        \frac{dy_1}{dt}(t) = A_1^L y_1(t) + B_1^L u_1(t), \\ 
        \vdots \\ 
        \frac{dy_{d_{in}}}{dt}(t) = A_{d_{in}}^L y_{d_{in}}(t) + B_{d_{in}}^L u_{d_{in}}(t),
    \end{gathered}
\end{align}
with all of their initial conditions $y_i(t) = 0$. Note that each $y_i(t) \in \mathbb{R}^{p_i}. $ Each $A_i^L$ may be associated with a different forget factor $\lambda_i$. Denoting this ensemble as:
\begin{align}
    \begin{gathered}
        \frac{dw}{dt}(t) = \hat{A} w(t) + \hat{B} u(t), \\ 
        w(t) = \int_0^T d\tau \; \hat{A} w(\tau) + \hat{B} u(\tau).
    \end{gathered}
\end{align}
The Wiener Laguerre model has a similar algebraic form, where the weights and biases have been appropriately redefined. 
\begin{align}
    \begin{gathered}
        z(t) = \phi_{\theta} (w(t)), \\ 
          \phi^{\theta}(w)(t) := \sum_{i=1}^k a_i \sigma\left(W_i^T w(t) + b_i\right).
    \end{gathered}
\end{align}

The hyper parameters of this model namely $\{p_i\}_{i=1}^{d_{in}}$, $\{\lambda_i\}_{i=1}^{d_{in}}$ are tuned on a case by case basis whereas the parameters of the neural network -- $\{W_i, b_i, a_i\}_{i=1}^k$ can be trained using back-propagation. We will use the mean squared error as the discrepancy function for a given input function.

\section{Probabilistic Barron functions}
A Wiener Laguerre model only offers point estimates of the output function. In practice, it will be of interest to consider models that can quantify the uncertainty in these predictions. In this section, we will derive such non-linear function approximation schemes  based on Barron functions. 

\begin{defn}[Barron function \cite{E2021Barron}]
Let $\sigma:\mathbb{R}\to\mathbb{R}$ be a measurable function. 
A function $f:\mathbb{R}^d \to \mathbb{R}$ is called a Barron function if there exists a finite signed measure 
$\mu$ on $\Theta := \mathbb{R} \times \mathbb{R}^d \times \mathbb{R}$ such that
\begin{equation}
f(x) = \int_{\Theta} a\, \sigma(W^\top x + b)\, d\mu(a,W,b).
\end{equation}
\end{defn}

Depending on the choice of $\sigma$, $\mu$; Barron functions reduce to different function approximation schemes. For instance, for arbitrary $\sigma$, if we parameterize $\mu$ using the atomic measure:
\begin{align}
    \begin{gathered}
        \mu_1(a,W,b) = \sum_{i=1}^K \delta(a,a_i) \; \delta(W, W_i) \; \delta(b, b_i), 
    \end{gathered}
\end{align}
then one arrives at a two-layer neural network. 
Similarly, when $\sigma = \cos $ and the weights $\{W_i\}_{i=1}^K$ and biases $\{b_i\}_{i=1}^K$ are sampled i.i.d such that:
\begin{align}
    \begin{gathered}
        W_i \sim \mathcal{N} \left(W; 0, \frac{1}{l^2} I\right), \quad 
        b_i \sim \mathcal{U}(b; [0, 2\pi]),
    \end{gathered}
\end{align}
and when the measure $\mu$ is parameterized as:
\begin{align}
    \begin{gathered}
        \mu_2(a,W,b) :=  \sum_{i=1}^K a_i \; \delta(W,W_i)\; \delta(b,b_i) ,
    \end{gathered}
\end{align}
this corresponds to the classical random Fourier feature \cite{RahimiRecht2007} approximation of a kernel machine estimator. Analogously, for arbitrary $\sigma$, when  we choose to use:
\begin{align}
    \begin{gathered}
        W_i \sim \mathcal{N} \left(W; 0, \frac{1}{d} I_d\right), \quad b_i \sim \mathcal{N}(b; 0, 1)
    \end{gathered}
\end{align}
and construct the measure $\mu_3$ in the same way as $\mu_2$, it corresponds to the extreme learning approximator \cite{Huang2006ELM}. In both these cases, we obtain generalized linear interpolants. Denoting by $\hat{\Omega} \subset \mathbb{R}^d$ where target function $g$ is observed, the feature matrix $\Phi \in \mathbb{R}^{K \times M}$ is calculated as $\Phi_{ij} = \sigma(W_i^T \hat{\Omega}_j + b_i)$. The estimation proceeds by solving the following least squares problem \footnote{Here, the symbol $A_{:,i}$ means that the $i^{th}$ column of a matrix A is indexed.}:
\begin{align}
    \begin{gathered}
        a^* := \arg \min_{a \in \mathbb{R}^{n \times K}} \sum_{i=1}^M \| a \Phi_{:,i}  - g(\hat{\Omega}_i)\|^2_2
    \end{gathered}
\end{align}

\subsection{Bayesian estimation}
In practice, observations are noisy.
Here, we assume a Gaussian noise model:
\begin{align}
g(\hat{\Omega}) &= a \Phi  + \varepsilon, \quad
\varepsilon \sim \mathcal{N}(0,\sigma^2 I_M).
\end{align}

For $\alpha > 0$, a Gaussian prior is placed on the coefficients:
\begin{align}
a \sim \mathcal{N}(0,\alpha^{-1} I_K).
\end{align}

By conjugacy, the posterior distribution of $a$ is Gaussian:
\begin{align}
\begin{gathered}
a \mid g(\hat{\Omega}) \sim \mathcal{N}(m_*, \Sigma_*), \\
\Sigma_*
=
\left(
\alpha I_K
+
\frac{1}{\sigma^2}
\Phi^T \Phi
\right)^{-1}, \quad
m_*
=
\frac{1}{\sigma^2}
\Sigma_* \Phi^T y.
\end{gathered}
\end{align}

For a new input $x_* \in \mathbb{R}^d$, let $\phi_* := \sigma(\tilde{W}^Tx_* + \tilde{b})$. 
Here $\tilde{W}$ and $\tilde{b}$ are the pre-sampled weights and biases. 
The posterior predictive distribution of the latent function value is
\begin{align}
f_* \mid g(\hat{\Omega})
\sim
\mathcal{N}
\left(
\phi_*^T m_*,
\;
\phi_*^T \Sigma_* \phi_*
\right).
\end{align}

\subsection{Numerical verification}

 We will compare these estimators on a trimodal Gaussian mixture function. The advantage of this test, is that, the function is known a-priori and that it scales to arbitrary dimensions. In this work, we will be primarily interested in measurements of functions up to 5 input dimensions. For the domain $\Omega := [-30, 30]^d$. For three points $x_1 := -15 \; 1_d$ , $x_2 := 15 \; 1_d$, $x_3 := 0_d$, the target function is defined as:
    \begin{align}
        \begin{gathered}
            f(x) = \sum_{i=1}^3 \mathcal{N}(x; x_i, 20 I_d). 
        \end{gathered}
    \end{align}



\begin{table}[ht]
\centering
\begin{tabular}{c|c|c||c|c}
\hline
 & \multicolumn{4}{c}{Models} \\
\cline{2-5}
$d$ 
& \multicolumn{2}{c||}{RFF} 
& \multicolumn{2}{c}{ELM} \\
\cline{2-3}\cline{4-5}
 & Mean & Std & Mean & Std \\
\hline
1  & $1.065\times 10^{-7}$ & $1.616\times 10^{-4}$
   & $2.148\times 10^{-7}$ & $2.080\times 10^{-4}$ \\
2  & $6.341\times 10^{-6}$ & $4.865\times 10^{-3}$
   & $6.168\times 10^{-6}$ & $4.210\times 10^{-3}$ \\
3  & $4.018\times 10^{-1}$ & $4.466\times 10^{-3}$
   & $1.697\times 10^{-2}$ & $9.304\times 10^{-3}$ \\
4  & $2.382\times 10^{0}$  & $3.922\times 10^{-3}$
   & $7.872\times 10^{-1}$ & $4.131\times 10^{-3}$ \\
5  & $1.435\times 10^{0}$  & $3.886\times 10^{-3}$
   & $8.884\times 10^{-1}$ & $3.998\times 10^{-3}$ \\
\hline
\end{tabular}
\vspace{2mm}
\caption{Approximation error (mean and standard deviation) for the tri-modal Gaussian function using Barron functions for increasing input dimension $d$ (fixed sample size).}
\label{tab:gaussian}
\end{table}

\Cref{tab:gaussian} reports RFF and ELM performance on a tri-modal Gaussian mixture with fixed training samples ($M=2000$) and features ($K=1500$) across dimensions, probing a finite-budget regime.
For $d=1,2$, both methods are accurate; errors are small and variability stems mainly from random feature sampling rather than bias.
From $d \ge 3$, the mean error increases sharply while the standard deviation stays roughly constant. This points to structural under-resolution due to fixed sample and feature budgets, not instability.
Increasing both samples and features mitigates this effect.

\section{Numerical experiments}
In the following, we will demonstrate our methods on systems-identification and timeseries modeling tasks. 

\subsection{Systems identification}
The synthetic data for our systems identification task is generated as follows. 
The input signal is constructed as a finite Fourier series
\begin{align}
    \begin{gathered}
        u(t) = \sum_{k=1}^{5} \frac{1}{\sqrt{5}} 
\sin(k \omega_0 t + \phi_k),
    \end{gathered}
\end{align}
with $\omega_0 = 1.0$ and phases $\phi_k \sim \mathrm{Unif}[0,2\pi]$.
The output is generated by the forced second-order system:
$\ddot{y}(t) + 0.8\,\dot{y}(t) + 4.0\,y(t) = 1.2\,u(t),$
with $y(0)=0$, $\dot{y}(0)=0$.
The system is simulated on $[0,50]$ with step size $dt=0.01$, and Gaussian observation noise with standard deviation $\sigma=0.02$ is added to the output.
This data is modeled using the Bayesian random Fourier feature (RFF) 
Wiener--Laguerre model. 
The linear dynamical component is parameterized using $p = 15$ Laguerre features 
with forgetting factor $\lambda = 30.0$. 
The nonlinear map is approximated using $K = 1000$ random Fourier features. 
Bayesian linear regression is performed with regularization parameter 
$\sigma = 8 \times 10^{-2}$. The predictions from this model is show in \cref{fig:sys-id-experiment}. 
The RMSE metric of the difference between the reference and the model prediction in the test region is $\eta := 0.07620$. The mean variance across samples of this difference is $0.01519$. 

\begin{figure}
    \centering
    \includegraphics[width=\linewidth]{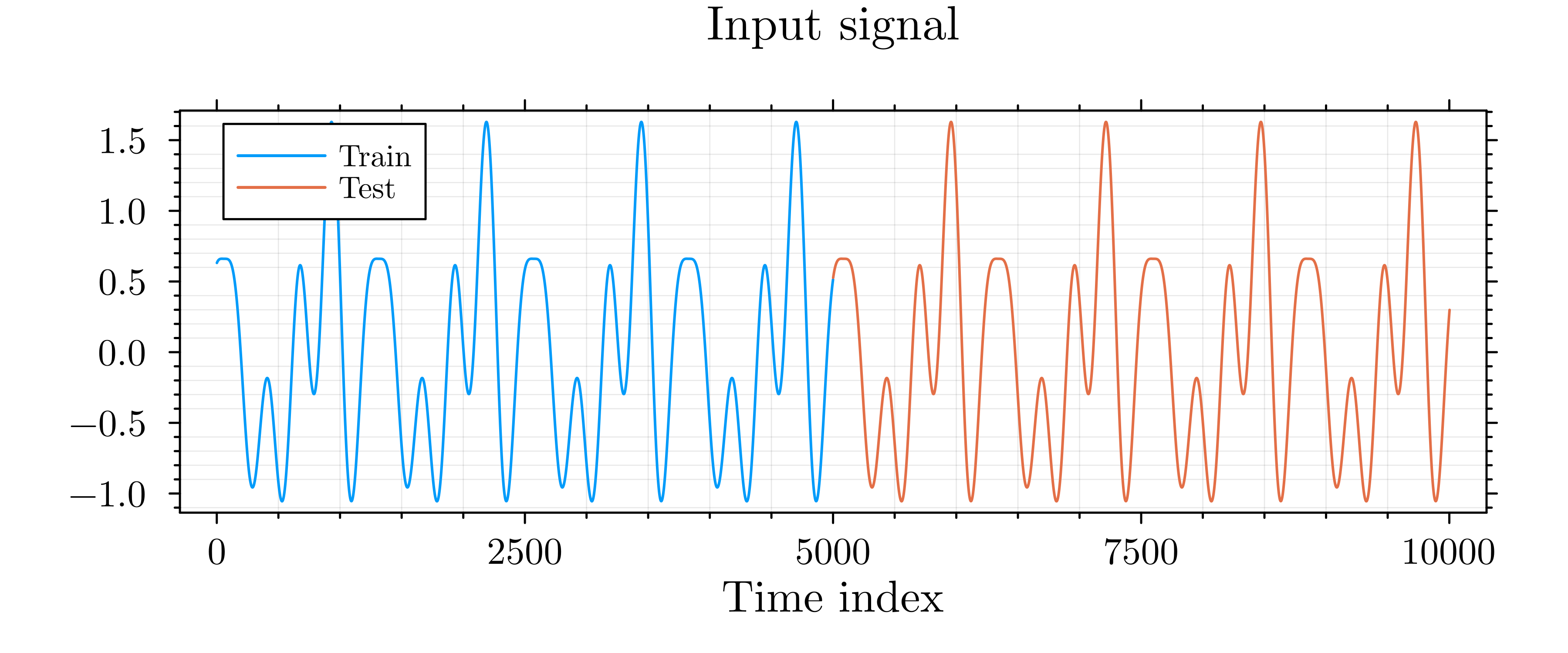}
    \includegraphics[width=\linewidth]{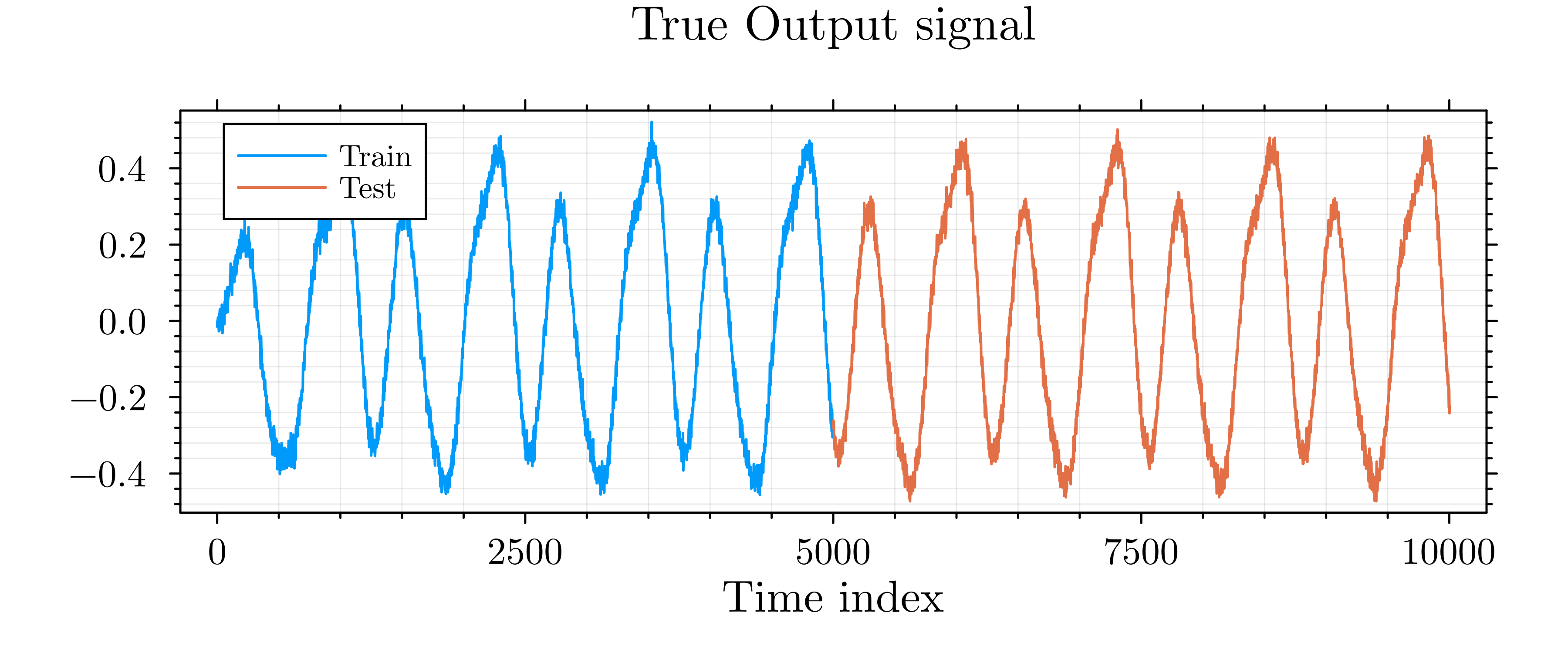}
    \includegraphics[width=\linewidth]{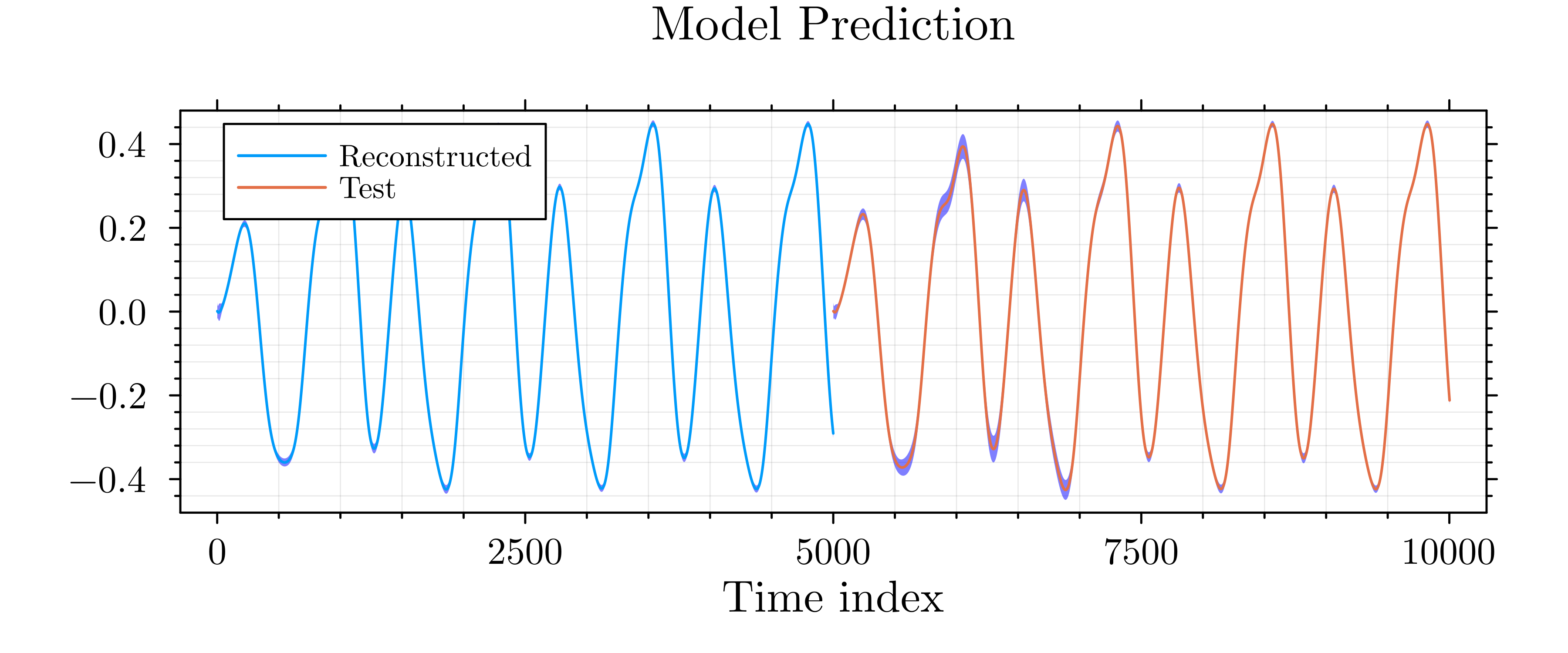}
    \caption{\textbf{Systems identification}: Train–test evaluation of a Bayesian RFF-Barron-Wiener–Laguerre model applied to a second order differential equation system. Here, time-index is the discrete sample number corresponding to uniformly sampled time points.
Top: Input signal with training (blue) and held-out test (orange) regions.
Middle: True system output corresponding to the full trajectory.
Bottom: Model reconstruction and prediction.
The model is trained on the first half of the trajectory and evaluated on the unseen second half, demonstrating accurate phase and amplitude tracking in the test region.}
    \label{fig:sys-id-experiment}
\end{figure}

\subsection{Timeseries modeling}

For the second task, data is generated from the nonlinear 
Van der Pol oscillator in the oscillatory regime,
\begin{equation}
\dot{x}(t) = v(t), 
\qquad
\dot{v}(t) = \mu \big(1 - x(t)^2\big) v(t) - x(t),
\end{equation}
with parameter $\mu = 2.0$ and initial condition 
$(x(0), v(0)) = (2.0, 0.0)$.
The system is simulated on $[0,40]$ with step size $dt = 0.01$. 
Additive Gaussian noise with standard deviation $\varepsilon = 10^{-1}$ 
is applied to both state components to obtain the observed time series.
Modeling this timeseries data can be posed as a systems identification problem where the timeseries is assumed to be the input signal and for a given point $t \in \mathbb{R}^+$, the $t+ k\delta t$ shifted timeseries as the output. Here, assume $k=1$ and $\delta t$ as the increment between two observation points. This modified dataset is learned using a Bayesian ELM--Barron--Wiener--Laguerre model. 
The linear dynamical component is parameterized using $p = 50$ Laguerre features 
with forgetting factor $\lambda = 3.0$. 
The nonlinear map is approximated using an extreme learning machine 
with $2000$ hidden neurons. 
Bayesian linear regression is performed with regularization parameter 
$\sigma = 5 \times 10^{-1}$. The model prediction is illustrated in \cref{fig:placeholder}. 
The RMSE of the difference between the signal and the model prediction in the test regime is $\eta = 0.9577$. The corresponding mean variance over samples is $0.00234$. 

\begin{figure}
    \centering
    \includegraphics[width=\linewidth]{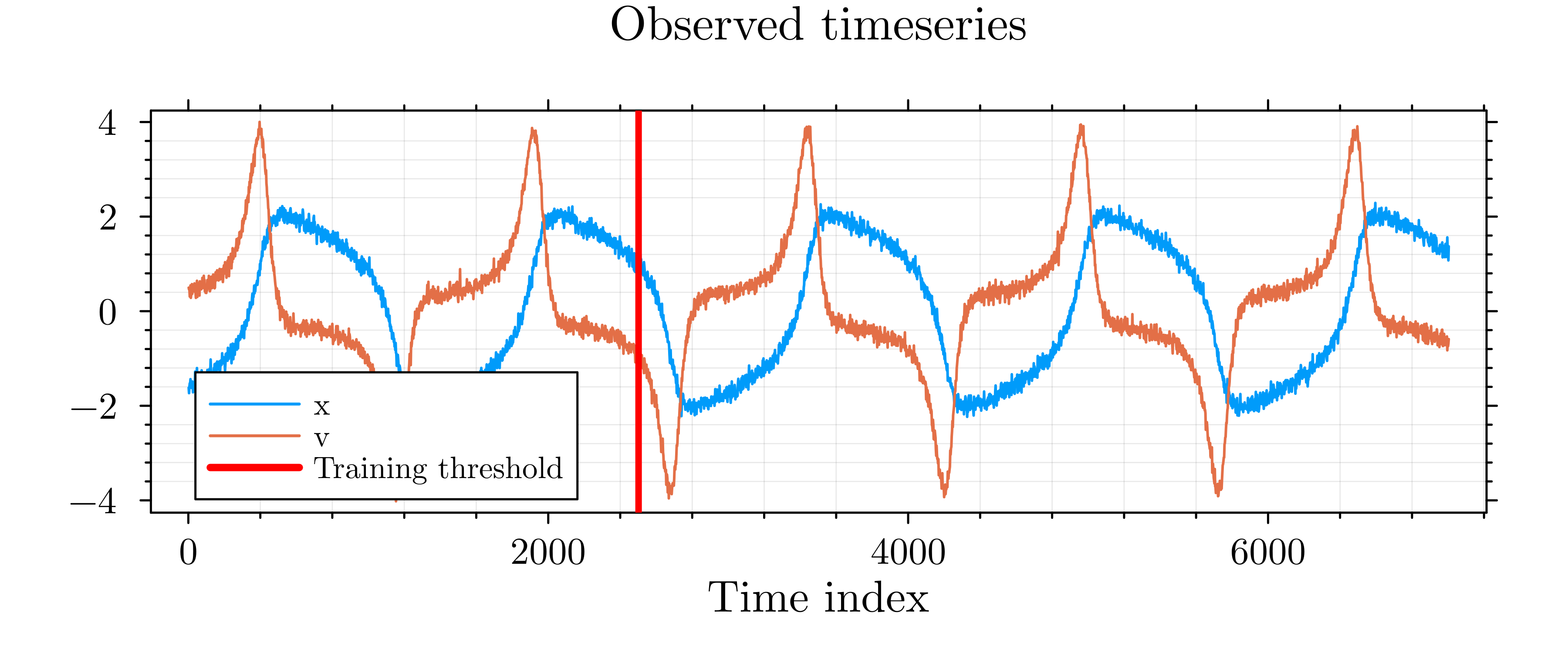}
    \includegraphics[width=\linewidth]{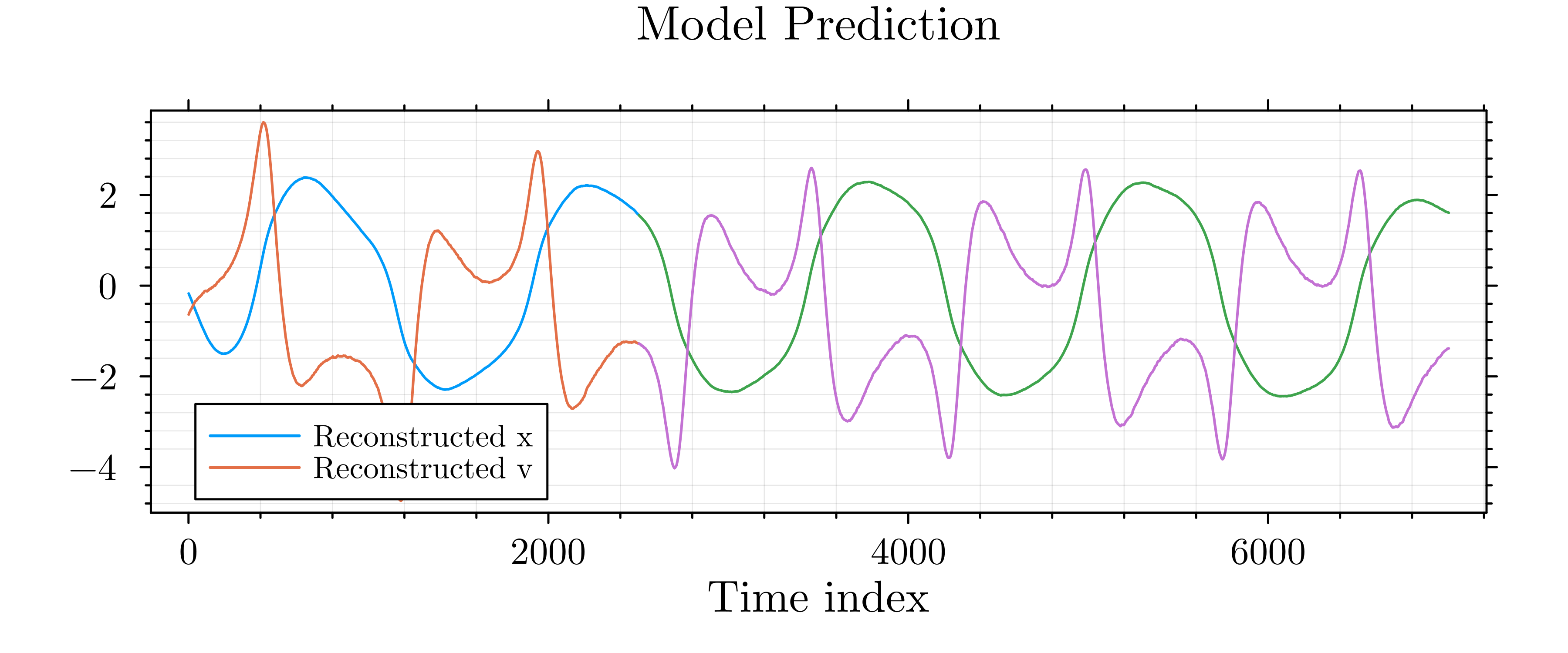}
    \caption{\textbf{Time-series modeling.}
     Here, time-index is the discrete sample number corresponding to uniformly sampled time points.
Top: Ground-truth trajectory.
Bottom: Model reconstruction. Blue and orange denote the training window; remaining segments are extrapolated.
The model preserves the limit-cycle dynamics outside the training domain.}
\label{fig:placeholder}
\end{figure}

\section{Conclusion}
This work has demonstrated how Barron functions and Wiener-Laguerre models may be integrated to learn continuous causal operators.
An interesting direction for future work is more robust uncertainty estimates by adapting $\sigma$ with data.
It is also of interest to extend the Wiener-Lagrange parameterization to generalized orthogonal basis filters.

\section*{Acknowledgment}
This work was partially supported by the Wallenberg AI, Autonomous Systems and
Software Program (WASP) funded by the Knut and Alice Wallenberg Foundation.

\onecolumn
\bibliographystyle{IEEEtran}
\bibliography{references}

\end{document}